\documentclass[journal]{IEEEtran}
\usepackage{dsfont}
\newcommand{\ind}{\mathds{1}}
\usepackage{cite}
\usepackage{amsmath,amssymb,amsfonts}
\usepackage{algorithm}
\usepackage{algpseudocode}
\usepackage{graphicx}
\usepackage{booktabs}
\usepackage{array}
\usepackage{multirow}
\usepackage{xcolor}
\usepackage{url}
\usepackage[hidelinks]{hyperref}

\graphicspath{{figures/}}
\newtheorem{proposition}{Proposition}

\newcommand{\TC}{\textrm{TC}}
\newcommand{\NTC}{\textrm{NTC}}
\newcommand{\wtc}{w_{\TC}}
\newcommand{\wntc}{w_{\NTC}}
\newcommand{\tcr}{\mathrm{TCR}_{\TC}}

\begin{document}

\title{When Do LLM Agents Help? Deadline-Aware Mixed-Criticality Task Scheduling at the Autonomous-Vehicle Edge}

\author{Reza~Zakerian\\
Westcliff University\\
r.zakerian.8143@westcliff.edu}

\markboth{}%
{Zakerian: When Do LLM Agents Help? Deadline-Aware Mixed-Criticality Task Scheduling at the Autonomous-Vehicle Edge}
\maketitle

\begin{abstract}
Autonomous vehicles offload latency-sensitive perception tasks to nearby mobile
edge computing (MEC) servers, where a missed safety-critical task is unsafe
rather than merely degraded. Large language models (LLMs) are increasingly
proposed as adaptive, explainable schedulers, yet evidence of when they help is
scarce. We study deadline-aware, mixed-criticality scheduling on heterogeneous
MEC servers, where time-critical (TC) tasks must be protected at a controlled
cost to best-effort traffic, and ask whether a multi-agent LLM control layer
improves on a strong heuristic. We answer in two steps. First we build the
heuristic: a windowed contract-net auction that orders each admission window
time-critical-first by earliest deadline and places tasks by
earliest-finish-time. Across $60$ instances on three topologies and $15$
baselines under an identical online constraint, it attains a TC completion rate
of $0.902$, above every baseline (Holm-corrected $p<0.001$; best baseline
$0.838$) and at $0.87$ of a CP-SAT upper bound. Second, we add the LLM control
plane. A controlled decomposition traces the scheduler's advantage to two
ordinary factors, the batching horizon and time-critical-first ordering; the
auction, the per-window LLM policy, and online adaptation add nothing while the
load is stationary, where the heuristic is already near-optimal. Under a mid-run surge of safety-critical tasks the picture
changes, and the LLM control plane gains significantly over both the static
heuristic and the bandit. LLM orchestration therefore earns its cost only when
non-stationarity opens headroom a fixed policy cannot use. We report
control-plane latency and rationale, and release all code and
seeded instances.\footnote{Code at:
\url{https://github.com/rezarz98/multi-agents-llm-edge-scheduling}.}
\end{abstract}

\begin{IEEEkeywords}
Autonomous vehicles, mobile edge computing, task scheduling, large language
models, multi-agent systems, contract-net protocol, deadline-aware scheduling,
mixed criticality.
\end{IEEEkeywords}

\section{Introduction}
\IEEEPARstart{A}{utonomous} vehicles perceive their environment by continuously
processing data from cameras, LiDAR, and radar, and much of this computation is
offloaded to nearby mobile edge computing (MEC) servers through roadside units
(RSUs) to meet stringent timing constraints~\cite{sars2024,mao2017survey}. The
edge scheduler that assigns these offloaded tasks to heterogeneous servers must
operate online and must complete as many tasks as possible before their
deadlines. The setting is inherently \emph{mixed-criticality}: a missed
time-critical (TC) task, such as obstacle detection feeding the control loop, can
be unsafe, whereas a missed best-effort non-time-critical (NTC) task, such as an
infotainment or map-update request, is merely degraded~\cite{vestal2007preemptive,burns2017survey}.
The scheduler must therefore protect time-critical work while limiting the
collateral cost to best-effort traffic.

Existing approaches each solve part of this problem. Classical dispatching rules
are online and deadline-aware, but myopic and blind to
placement~\cite{liu1973scheduling,dertouzos1989online}. Mapping heuristics choose
servers well, yet optimise makespan rather than
deadlines~\cite{ibarra1977heuristic,topcuoglu2002heft}. Metaheuristic and
deep-reinforcement-learning (DRL) schedulers optimise a fixed instance or train
offline, so they adapt poorly once conditions
change~\cite{wang2022ipso,zeng2024ddqn}.

Large language models (LLMs) have emerged as general reasoners and the control
core of multi-agent systems~\cite{brown2020language,wang2024agentsurvey}, and a
fast-growing literature applies them to scheduling and resource
management~\cite{jadhav2025llmhpc}. Their appeal is adaptivity with
natural-language interpretability. Their obstacle is cost: LLM inference is
orders of magnitude slower than a heuristic dispatch, so a model cannot sit on
the per-task critical path. This leaves an architectural question that, to our
knowledge, has not been answered with rigorous evidence for autonomous-vehicle
edge scheduling: can a lightweight LLM control layer, kept off the critical path,
improve on a competitive deadline-aware scheduler, and if so, when?

We answer in two steps, and the answer turns out to be conditional. First we
build the scheduler, a \emph{windowed contract-net} auction in which a broker
announces release-ordered admission windows, per-server agents bid, and each task
is placed by earliest-finish-time, with time-critical priority enforced through
ordering rather than admission refusal. Second, we add a lightweight LLM control
plane of broker, edge, and monitor agents that set policy once per window, and
test it against strong non-LLM controls. The scheduler proves near-optimal while
the workload is stationary, which leaves an adaptive layer nothing to recover;
only when a mid-run surge opens headroom does the LLM control plane help.

The evaluation is built to make that answer trustworthy. All $15$ baselines run
under the same online constraint. We generate $60$ independent instances and
report paired significance tests with effect sizes. We compute the CP-SAT offline
optimum to bound what any scheduler could achieve. Finally, we decompose our own
method, removing one component at a time, to locate the source of its advantage.

Our contributions are as follows:
\begin{itemize}
\item \emph{A simple, optimum-anchored scheduler} for deadline-aware,
mixed-criticality autonomous-vehicle task offloading. It admits tasks in windows,
protects time-critical work by ordering rather than by refusing best-effort
traffic, and places each task on the server that can finish it soonest, all
realised as a contract-net auction. Across three topologies and $60$ instances it
outperforms $15$ standard baselines and reaches $0.87$ of a CP-SAT upper bound,
an optimality gap of at most $13\%$ (Sec.~\ref{sec:overall}).

\item \emph{A controlled decomposition} that locates the advantage. Two ordinary
factors are responsible, the batching horizon and time-critical-first ordering,
rather than the auction or any learned component. A windowed time-critical-first
heuristic reproduces the full method within confidence intervals, and neither an
LLM control plane nor online adaptation, by an LLM monitor or a UCB1 bandit,
improves on it while the load is stationary (Sec.~\ref{sec:decomp}).

\item \emph{An evidence-based answer to when LLM orchestration helps.} Under a
non-stationary surge of safety-critical tasks, where the static heuristic is no
longer near-optimal, the LLM control plane gains over both that heuristic and a
non-LLM bandit ($+0.005$, $p=0.004$). It gives no benefit under stationary load
or capacity loss (Sec.~\ref{sec:disruption}). We pair this with measurements of
control-plane latency, decision frequency, and rationale
(Sec.~\ref{sec:llm}).
\end{itemize}

The rest of the paper is organised as follows. Sec.~\ref{sec:related} reviews
related work; Sec.~\ref{sec:model} presents the system model and problem
formulation; Sec.~\ref{sec:method} describes the proposed scheduler and its two
planes; Sec.~\ref{sec:setup} details the experimental setup; and
Sec.~\ref{sec:results} reports the results. Sec.~\ref{sec:conclusion} concludes.

\section{Related Work}\label{sec:related}
\textbf{Dispatching rules and real-time scheduling.} First-come-first-served
(FCFS), earliest-deadline-first (EDF)~\cite{liu1973scheduling}, earliest-due-date
(EDD), earliest-feasible-deadline-first (EFDF), critical-ratio (CR),
cost-over-time (COVERT), and efficient-resource-allocation (ERA)
rules~\cite{pinedo2016scheduling,sars2024}, together with least-laxity-first
(LLF)~\cite{dertouzos1989online}, order pending tasks from local features. They
are cheap and naturally online, but they look only one task ahead and say nothing
about which server should run it.

\textbf{Heterogeneous mapping.} Min-Min, Max-Min~\cite{ibarra1977heuristic,braun2001comparison},
Sufferage~\cite{maheswaran1999dynamic}, and HEFT~\cite{topcuoglu2002heft} map
independent tasks to machines to minimise makespan. Maheswaran \emph{et
al.}~\cite{maheswaran1999dynamic} distinguish immediate and batch dynamic modes
for online use, a distinction we adopt so that every baseline is treated fairly.
These methods place tasks well, but they are deadline-agnostic.

\textbf{Learning-based scheduling.} Metaheuristics such as PSO~\cite{kennedy1995pso,wang2022ipso}
and GA~\cite{goldberg1989genetic,wang2024hybrid}, and DRL from the Deep
Q-Network~\cite{mnih2015human} and its double variant~\cite{vanhasselt2016double}
to edge makespan minimisation~\cite{zeng2024ddqn} and MEC
offloading~\cite{liu2021online}, learn capable policies. Both families pay for
that strength: they optimise a fixed instance or train offline, which makes them
slow to adapt and hard to interpret online.

\textbf{Mixed-criticality scheduling.} Vestal's model~\cite{vestal2007preemptive}
and the survey of Burns and Davis~\cite{burns2017survey} formalise systems that
mix criticality levels; our TC/NTC weighting is a simple two-level instance of
this setting, adapted to throughput maximisation for vehicular offloading.

\textbf{LLMs for systems.} LLMs are capable reasoners~\cite{brown2020language}
and multi-agent controllers~\cite{wang2024agentsurvey}, and a recent literature
applies them to scheduling and resource management~\cite{jadhav2025llmhpc}. Cost
rules out putting them in a data-plane role. We differ from this work in two
ways: we confine LLM reasoning to a periodic control plane, and we measure its
contribution against non-LLM controls and an offline optimum rather than
reporting only the cases where it succeeds. The result is a conditional claim
instead of an absolute one.

No single family above covers the setting we need. We therefore begin from a
model that makes deadlines, criticality, and placement explicit at once.

\section{System Model and Problem Formulation}\label{sec:model}
We extend the two-layer hierarchical autonomous-vehicle MEC architecture of
Zakerian and Gholami~\cite{sars2024} with a control layer, and add
task-criticality classes and a weighted-throughput objective; the resulting
three-layer design (vehicles, RSUs, edge servers, control plane) is shown in
Fig.~\ref{fig:arch}, and Table~\ref{tab:notation} summarises the notation.

\begin{table}[t]
\centering
\caption{Summary of notation.}
\label{tab:notation}
\renewcommand{\arraystretch}{1.12}
\begin{tabular}{@{}ll@{}}
\toprule
Symbol & Meaning \\
\midrule
$\mathcal{T}=\{t_1,\dots,t_n\}$ & set of $n$ independent tasks \\
$t_i^r,\;t_i^d$ & release time and deadline of $t_i$ \\
$t_i^w,\;t_i^s$ & workload (MI) and file size (KB) of $t_i$ \\
$\kappa_i\in\{\TC,\NTC\}$ & criticality class of $t_i$ \\
$\mathcal{E}=\{ec_1,\dots,ec_m\}$ & heterogeneous edge computing servers (ECS) \\
$c_k^j,\;r_k^j,\;\tau_k^j$ & PU $j$ of $ec_k$; its rate; its free time \\
$\delta_i^{\mathrm{ap}},\delta_i^{\mathrm{br}},\delta_{i,k}^{\mathrm{bs}}$ &
vehicle--RSU, broker, base-station delays \\
$C_i(c_k^j)$ & completion time of $t_i$ on PU $c_k^j$ \\
$x_i\in\{0,1\}$ & $1$ if $t_i$ completes before its deadline \\
$\wtc,\wntc$ & objective weights (TC, NTC) \\
$W$ & auction (admission) window size \\
$R$ & monitor review period \\
\bottomrule
\end{tabular}
\end{table}

\subsection{Task and Delay Model}
Vehicles generate tasks $t_i=\langle t_i^r,t_i^d,t_i^w,t_i^s,\kappa_i\rangle$
with release $t_i^r$, deadline $t_i^d>t_i^r$, workload $t_i^w$ (MI), file size
$t_i^s$, and class $\kappa_i$; each must complete within $[t_i^r,t_i^d]$. Servers
$ec_k$ host non-preemptive processing units (PUs) $c_k^j$ of rate $r_k^j$, available at $\tau_k^j$.
The computation delay is $t_i^w/r_k^j$. Communication delay is incurred in
transit, overlapping any queueing at the PU: $t_i$ arrives at server $ec_k$ at
time $t_i^r+\delta_i^{\mathrm{ap}}+\delta_i^{\mathrm{br}}+\delta_{i,k}^{\mathrm{bs}}$,
summing the vehicle-to-RSU access hop $\delta_i^{\mathrm{ap}}$, the RSU-to-broker
relay $\delta_i^{\mathrm{br}}$, and the broker-to-server link
$\delta_{i,k}^{\mathrm{bs}}$; it then contends for the PU, which it occupies for
its processing time alone. Hence
\begin{equation}
C_i(c_k^j)=\max\!\Big(\tau_k^j,\;t_i^r+\delta_i^{\mathrm{ap}}
+\delta_i^{\mathrm{br}}+\delta_{i,k}^{\mathrm{bs}}\Big)+\frac{t_i^w}{r_k^j}.
\label{eq:ct}
\end{equation}

This corrects a common additive form that adds all communication after the PU
becomes free and thereby double-counts delay under contention. A task completes
if and only if $C_i(c_k^j)\le t_i^d$ on its assigned PU, that is,
$x_i=\ind[C_i(c_k^j)\le t_i^d]$. A task that cannot meet its deadline is
dropped and does not occupy a PU.

\subsection{Objective and Complexity}
With weights $\wtc>\wntc>0$, the scheduler chooses online an ordering and an
assignment to maximise weighted on-time throughput,
\begin{equation}
\max \sum_{i=1}^{n} w(\kappa_i)\,x_i,
\label{eq:obj}
\end{equation}

subject to each PU serving one task at a time, non-preemptively, using at any
time $t$ only tasks with $t_i^r\le t$. The primary measure is the time-critical
completion rate $\tcr=\big(\sum_{i:\kappa_i=\TC}x_i\big)\big/\,|\{i:\kappa_i=\TC\}|$,
with \eqref{eq:obj} as secondary. This generalizes the maximum-processed-tasks
objective of~\cite{sars2024} to the mixed-criticality setting.

\begin{figure}[t]
\centering
\includegraphics[width=\columnwidth]{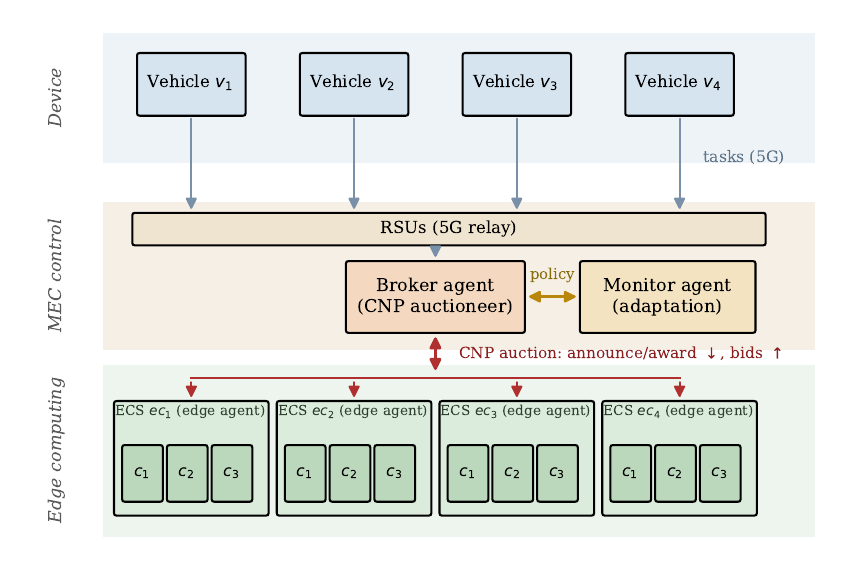}
\caption{Three-layer autonomous-vehicle MEC architecture with broker, edge, and
monitor agents.}
\label{fig:arch}
\end{figure}

\begin{proposition}
Maximising weighted on-time throughput \eqref{eq:obj} is NP-hard.
\end{proposition}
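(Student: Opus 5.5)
We prove hardness already for the offline decision version, in which the scheduler knows all tasks in advance; an online algorithm can only do worse, so this suffices. We restrict to a special case: one server with a single PU of rate $1$, all communication delays zero, and every task time-critical (or all weights equal). Then \eqref{eq:ct} becomes $C_i=\max(\tau,t_i^r)+t_i^w$, and \eqref{eq:obj} becomes the number of tasks completed by their deadlines on one non-preemptive machine with release dates. The paper's drop rule, under which a task that misses its deadline does not occupy the PU, matches the classical convention that late jobs are discarded. This is exactly $1\,|\,r_j\,|\,\sum U_j$, which is strongly NP-hard (Lenstra, Rinnooy Kan and Brucker). Since the special case is contained in the general problem, the general problem is NP-hard.

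To keep the argument self-contained, we also give the reduction behind that hardness, from 3-PARTITION. Take integers $a_1,\dots,a_{3q}$ with $B/4<a_\ell<B/2$ and $\sum a_\ell=qB$. Create $3q$ \emph{item} tasks with release $0$, deadline $qB+(q-1)$, and workload $a_\ell$. Create $q-1$ \emph{separator} tasks. Separator $s$ has release $sB+(s-1)$, deadline $sB+s$, and workload $1$, so its execution window is forced exactly.

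Ask whether all $4q-1$ tasks can complete on time. If so, the separators occupy their fixed unit slots. This leaves $q$ gaps of length exactly $B$ before the common item deadline, and because the items total $qB$, they must fill every gap exactly. The bounds $B/4<a_\ell<B/2$ then put exactly three items in each gap, which yields a 3-partition. Conversely, any 3-partition gives a feasible schedule. The numbers are polynomially bounded, so the problem is strongly NP-hard. This also holds with $m$ identical PUs, by adding $m-1$ dummy PUs blocked by separator-like tasks, or simply by keeping $m=1$.

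The step that needs care is checking that the paper's model does not secretly relax the classical problem. First, non-preemption must hold, and the paper states it. Second, there must be no option to reject a task before it occupies a PU. The drop rule is the only relaxation, and it coincides with the $\sum U_j$ convention of scheduling late jobs arbitrarily at the end. Third, the heterogeneous rates $r_k^j$ and delays $\delta$ must be free parameters that we may set to $1$ and $0$. With these checks, the instance maps faithfully, and NP-hardness of \eqref{eq:obj} follows.
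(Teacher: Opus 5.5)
Your proof is correct and follows the paper's route: restrict to a single rate-one PU with zero communication delays, and invoke the strong NP-hardness of single-machine on-time-job maximisation with release dates due to Lenstra, Rinnooy Kan and Brucker. Your explicit 3-PARTITION construction and your all-TC restriction make the paper's one-line sketch self-contained; the all-TC case also embeds literally into the paper's two-weight model, which the paper's appeal to arbitrary weights $w_j$ in $1\,|\,r_j\,|\,\sum w_j U_j$ does not.
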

\noindent\emph{Proof sketch.} Even the single-machine special case with release
dates, minimizing the weighted number of late jobs
($1\,|\,r_j\,|\,\sum w_j U_j$), equivalently maximizing the weighted number of
on-time jobs, is strongly NP-hard~\cite{lenstra1977complexity}; our
unrelated-machine, communication-delay setting is a generalization. 
\smallskip

\noindent This hardness motivates heuristic and, potentially, learned
schedulers, and frames the offline optimum we compute in Sec.~\ref{sec:setup} as
a bound rather than an operational target.

\subsection{Modelling Assumptions}
Three idealisations are worth stating. First, communication
delays are deterministic functions of file size and path; a stochastic-channel
study is future work. Second, dropped tasks are removed at zero cost, which
grants every scheduler perfect infeasibility foresight and mildly advantages
methods that explicitly test feasibility, including ours. Third, there is no
queueing at the RSU or broker. These are standard for this class of study but
should temper absolute-rate interpretation; our comparisons hold all of them
fixed across methods.

\section{Proposed Scheduler}\label{sec:method}
We build the scheduler in four steps: why the cost of LLM inference dictates a
two-plane design; the deterministic \emph{data plane} that dispatches every task;
the LLM \emph{control plane} that configures it; and the full procedure with its
cost and its fallback.

\begin{figure}[t]
\centering
\includegraphics[width=\columnwidth]{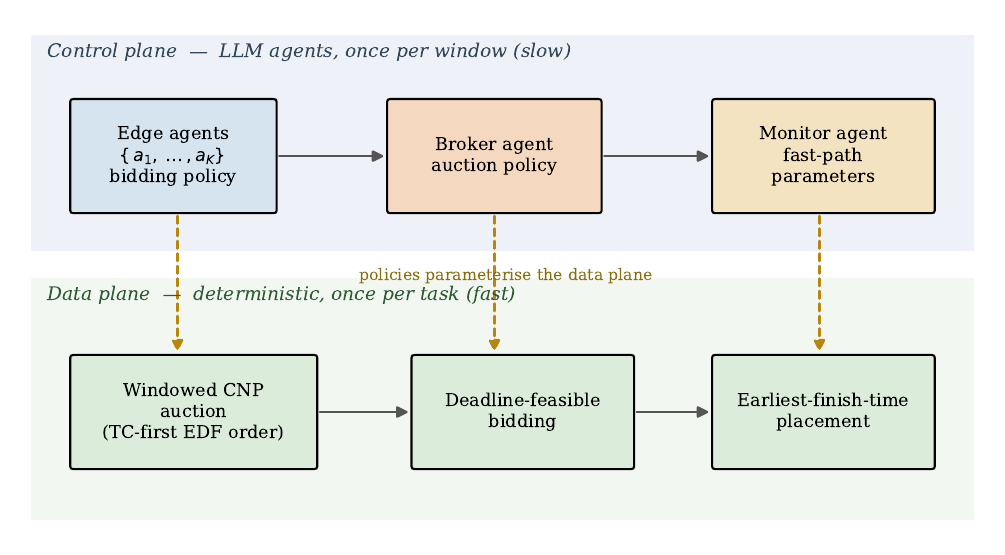}
\caption{Two-plane design: LLM agents set policy, deterministic auction
dispatches tasks.}

\label{fig:framework}
\end{figure}

\subsection{Design Rationale}
The cost of LLM inference, rather than any property of the scheduling problem,
dictates the architecture. A model call takes seconds, whereas an admission
decision must be taken in the time it takes a PU to free; placing a
language model on the per-task path is therefore not a tuning choice but an
impossibility. Any useful role for an LLM here must be one in which it is
consulted rarely and its output is reused many times.
 
This leads to a two-plane separation (Fig.~\ref{fig:framework}). A deterministic
\emph{data plane} dispatches every task using fixed arithmetic on PU clocks. A
lightweight \emph{control plane} of LLM agents sets the parameters that the data
plane obeys, once per admission window. Because the control plane is invoked per
window rather than per task, the number of model calls scales with $n/W$ and is
independent of the task arrival rate; the data plane's latency is unaffected by
the presence or absence of the LLM. The separation also makes the empirical
question of this paper well posed: the two planes can be enabled and disabled
independently, so the contribution of LLM reasoning can be measured against the
same data plane running under a fixed policy.
 
\subsection{Data Plane: Windowed Contract-Net}
 
\subsubsection{Windows and decision epochs}
Released tasks are grouped into \emph{admission windows} of $W$ tasks in release
order, and the scheduler handles one window at a time. A window $\omega$ is
complete only once its $W$th task has been released. We therefore define its
\emph{decision epoch} $t^{\mathrm{cl}}_\omega$ as the release time of that last
task, the earliest instant at which the window's contents are known.

This horizon carries most of the scheduler's advantage
(Sec.~\ref{sec:decomp}), so it is worth being precise about what it does and does
not provide. It grants an \emph{intra-window ordering lookahead}: the scheduler
sees all $W$ tasks before committing to any of them, and can order them by
criticality and deadline instead of reacting to one arrival at a time. It grants
no foreknowledge of unreleased work. Tasks enter windows strictly in release
order, and by the $\max$ in \eqref{eq:ct} no task begins before it physically
arrives at its server.

The window size $W$ thus trades accuracy against delay. A larger window orders
more tasks jointly and raises time-critical completion, but it also holds every
decision until $t^{\mathrm{cl}}_\omega$, a delay that grows with $W$ at a fixed
arrival rate. We use $W{=}40$ throughout and report the full sweep in
Sec.~\ref{sec:decomp}.
 
\subsubsection{Bidding}
Within a window, scheduling follows the contract-net
protocol~\cite{smith1980contract}. The broker announces the window's tasks to the
$m$ edge agents, one per server. For each announced task $t_i$, the agent of
server $ec_k$ evaluates \eqref{eq:ct} across its own PUs and replies with a
single bid, the earliest completion time it can offer,
\begin{equation}
b_{i,k}=\min\nolimits_{j} C_i(c_k^j),
\label{eq:bid}
\end{equation}
together with a feasibility flag $\ind[\,b_{i,k}\le t_i^d\,]$ marking whether
that bid meets the task's deadline. Bids are computed from locally held state
only, namely the agent's own PU rates and free times, so no global schedule needs
to be materialised.
 
\subsubsection{Ordering and award}
The broker sorts the window \emph{time-critical-first}, and by earliest deadline
within each criticality class, then awards tasks in that order. Each task is
awarded to the agent offering the earliest \emph{feasible} bid, ties broken by
current load; the winning agent executes the task on the PU that produced its
bid, and updates that PU's free time $\tau_k^j$ before the next award. A task
with no feasible bid is dropped and occupies no PU.
 
We call this rule \emph{windowed TC-EDF+EFT}: earliest-deadline-first ordering
applied time-critical-first (TC-EDF), with earliest-finish-time (EFT) placement,
over windows of size $W$. Two properties matter for the analysis that follows.

First, criticality is enforced purely by \emph{ordering}, never by refusing
best-effort work. An NTC task is delayed behind TC tasks but still bids and can
still win, so critical traffic is protected by contention rather than by
admission policy. Second, every step of the loop is arithmetic on PU clocks, so
the data plane is deterministic and no LLM sits on this path. Greedy EFT
placement is not the only reasonable choice, and Sec.~\ref{sec:decomp} quantifies
what it costs relative to spreading load across all deadline-feasible servers.
 
\subsection{Control Plane: LLM Agents}
Once per window, three agent types set the fast-path parameters $\Theta$ that the
data plane will obey. Each agent receives a compact JSON description of the
current state: per-server load and free-time summaries, rolling completion and
deadline-miss counts by class, and the current value of $\Theta$. Each replies
through a constrained output schema, so a reply is either a valid parameter
assignment or a detectable failure, never free text the data plane must
interpret.

The three roles differ in scope. \emph{Edge agents} set a bidding policy from
local load. \emph{The broker agent} sets the tie-break rule and decides whether
to drop best-effort tasks that provably cannot finish in time, freeing capacity
earlier than waiting for them to fail. \emph{The monitor agent} works on a slower
cadence: every $R$ windows it reviews the rolling statistics and adapts $\Theta$,
namely the ordering policy (strict EDF versus a randomised top-$k$ EDF) and the
degree of best-effort deferral. Every monitor decision is logged with the
natural-language rationale the model gave for it, which we examine in
Sec.~\ref{sec:llm}.

The monitor's authority is deliberately small. With $W{=}40$ and $R{=}2$, a
$200$-task instance yields five windows and two review points, so the monitor
acts at most twice. This bounds how much it can contribute, and we return to the
point when interpreting its null result.

Adaptation and LLM reasoning are separable, so we separate them. As a non-LLM
control we implement a \emph{UCB1 bandit} over the identical parameter set, with
the same review cadence and the same authority over $\Theta$. Any advantage the
LLM shows over this control therefore cannot be credited to the mere presence of
an adaptive loop, which is what makes the comparison in
Sec.~\ref{sec:disruption} informative rather than merely favourable.
 
\subsection{Procedure, Cost, and Fallback}
Algorithm~\ref{alg:has} states the full procedure. The control plane occupies
lines 4--7 and runs once per window; the data plane occupies lines 8--12, which
order the window under the current $\Theta$ and then dispatch its tasks
deterministically. Setting the optional lines 5 and 6 aside recovers the pure
heuristic scheduler used as the static control throughout Sec.~\ref{sec:results}.
 
\begin{algorithm}[t]
\caption{Windowed contract-net scheduler (with optional control plane)}
\label{alg:has}
\begin{algorithmic}[1]
\Require tasks $\mathcal{T}$, servers $\mathcal{E}$, window size $W$, review period $R$
\State $\Theta\gets$ default fast-path parameters
\State partition $\mathcal{T}$ into release-ordered windows of size $W$
\For{each window $\omega$ with index $q$}
  \State broker announces $\omega$; edge agents report load
  \State \textbf{(optional)} edge/broker agents set policy \Comment{LLM}
  \If{$q\bmod R=0$} adapt $\Theta$ via monitor (LLM) or bandit \EndIf
  \State order $\omega$ time-critical-first, EDF within class, using $\Theta$
  \For{each task $t_i$ in the ordered window}
     \State collect bids $\{b_{i,k}\}$; award earliest feasible bid
     \State execute $t_i$; update $\tau_k^j$; record $x_i$
  \EndFor
\EndFor
\State \Return per-class completion statistics
\end{algorithmic}
\end{algorithm}
 
\emph{Data-plane cost.} For $n$ tasks, $m$ servers of $P$ PUs each, every
announced task draws one bid per server and each bid minimises over that server's
PUs, giving $O(nmP)$ finish-time evaluations in total. Ordering contributes
$O(n\log W)$ across all windows, which is dominated by the bidding term for any
realistic $m$ and $P$. The cost is independent of $\Theta$, so enabling the
control plane does not change data-plane complexity.
 
\emph{Control-plane cost.} Each window issues one call per edge agent and one to
the broker, and the monitor is consulted every $R$ windows, so the number of
model calls per instance is
\begin{equation}
\left\lceil \frac{n}{W} \right\rceil (m+1)
\;+\;\left\lfloor \frac{n}{WR} \right\rfloor
\;=\;O\!\left(\frac{nm}{W}\right),
\label{eq:llmcost}
\end{equation}
which for the configuration used here ($n{=}200$, $W{=}40$, $m{=}4$, $R{=}2$)
gives $27$ calls, matching the live measurement in Sec.~\ref{sec:llm}. The
critical property is the $1/W$ factor: model cost falls as the horizon grows,
while data-plane cost does not, so the two planes can be sized independently.
 
\emph{Fallback.} Because control-plane latency is not bounded by the window
duration, a deployment cannot assume a policy update will arrive before the
window it was requested for. The data plane therefore always holds a committed
$\Theta$ and proceeds with it if no reply has arrived, falling back to the last
committed value or to the heuristic default. Sec.~\ref{sec:llm} measures the
latency that makes this necessary.

\section{Experimental Setup}\label{sec:setup}
\textbf{Instances.} Each instance has $200$ tasks with a $60/40$ TC/NTC split.
Release times, deadlines, workloads, and file sizes are resampled independently
from the empirical marginals of a contended reference benchmark, giving fresh
realisations of the same statistical character. We evaluate $20$ seeds on each of
three MEC topologies (two four-server, one six-server), $60$ instances in total;
the same task core is used across topologies so that only the network differs.

The load is contended but not saturated, which is what makes the benchmark
discriminative. Averaged over instances, $98.5\%$ of TC tasks are \emph{feasible
in isolation}, meaning some PU could complete them in an otherwise-empty system.
Only about $1.5\%$ are therefore impossible by construction, and the rest of the
shortfall below that ceiling comes from contention, which is exactly what a
scheduler can act on.

\textbf{Baselines and controls.} We compare against $15$ baselines. All of them
run through one shared \emph{event-driven online} dispatch engine that exposes
only released tasks: a decision is taken whenever a server frees, over whatever
has arrived by that instant. This is the online-consistent
(immediate-mode~\cite{maheswaran1999dynamic}) form of each heuristic, and using a
single engine ensures no method sees information the others lack.

The proposed auction differs in one respect, and we state plainly what it gains
from that. It batches each release-ordered window ($W{=}40$) before dispatching
it, which grants the intra-window ordering lookahead defined in
Sec.~\ref{sec:method} without breaching release-time causality. Part of its
margin over the unbatched baselines in Table~\ref{tab:main} is therefore the
horizon itself rather than the auction. Sec.~\ref{sec:decomp} separates the two
by evaluating windowed list schedulers directly. We further run three online
decomposition controls (EDF+EFT, TC-EDF+EFT, and EDF with random placement) and
a UCB1-bandit variant of the auction.

\textbf{Ceiling.} For each instance, we compute the feasibility ceiling (TC tasks
completable on some PU absent contention) and the CP-SAT offline optimum
(maximum weighted on-time throughput under \eqref{eq:ct}), reporting each
scheduler's weighted throughput as a fraction of the solver's rigorous upper
bound ($/$UB; higher is better, $1.0$ meaning the bound is met).

\textbf{Metrics and statistics.} The primary metric is $\tcr$; we also report the
best-effort completion rate $\mathrm{TCR}_{\NTC}$ and weighted throughput
($\wtc{=}2,\wntc{=}1$). Makespan is near-identical across methods (about $49$
time units) and non-discriminative, so we omit it.
We report mean $\pm$ 95\% CI and, referenced to the proposed auction, paired
Wilcoxon signed-rank tests with Holm--Bonferroni correction and Cliff's-delta
effect sizes, exploiting the fact that every scheduler runs on the identical
instances.

\textbf{Implementation.} All schedulers are in Python; PSO, GA, and DDQN
hyperparameters and convergence are in Appendix~\ref{app:hp}. LLM agents use a
lightweight model (Claude Haiku) with structured outputs, and responses are cached, and Sec.~\ref{sec:llm} reports live latency separately.
Unless noted, $W{=}40$ and $R{=}2$. We denote by \emph{CNP} the auction with a
heuristic control plane, \emph{CNP-LLM} the auction with LLM broker and edge
agents, \emph{HAS} (hierarchical agentic scheduler) the full method that adds the
LLM monitor, and \emph{CNP-bandit} the bandit variant.

\section{Results}\label{sec:results}
We report the results in the order the questions arise. Sec.~\ref{sec:overall}
establishes that the scheduler is strong and how close to optimal it is;
Sec.~\ref{sec:decomp} asks which of its components is responsible;
Sec.~\ref{sec:disruption} asks when, if ever, the LLM control plane adds
anything; and Sec.~\ref{sec:llm} measures what that plane costs to run.

\subsection{Overall Comparison}\label{sec:overall}
Table~\ref{tab:main} and Fig.~\ref{fig:ranking} report the $60$-instance
comparison. The windowed contract-net scheduler attains $\tcr=0.902\pm0.009$,
significantly above every baseline (Holm $p<0.001$, Cliff's $\delta$ from $-0.57$
to $-0.96$). The strongest baseline, GA, reaches $0.838$, and the DRL baseline
reaches $0.791$. The scheduler also records the highest weighted throughput
($274.9$), the objective it optimises. Deadline-agnostic mapping heuristics and
DRL sit at the bottom for a straightforward reason: optimising makespan or a
learned proxy does not make deadlines easier to meet.

Protecting time-critical work has a price, and the table shows it. The
best-effort rate of CNP ($\mathrm{TCR}_{\NTC}=0.731$) is the lowest in the
comparison, and its unweighted completion ($0.834$) trails EDF with random
placement ($0.849$). This is the intended consequence of time-critical-first
ordering under the $2\!:\!1$ weighting, and it is the same mechanism that lifts
weighted throughput above every deadline-agnostic method.

The optimum shows how much room is left. The scheduler operates at $0.874$ of
the rigorous CP-SAT upper bound, an optimality gap of $12.6\%$, and at about
$0.92$ of the feasibility ceiling ($0.985$). Most of the residual is the
intrinsic cost of deciding online rather than with full knowledge of the future.

\begin{table}[t]
\centering
\caption{Main comparison against all baselines over 60 instances.}

\label{tab:main}
\renewcommand{\arraystretch}{1.05}
\setlength{\tabcolsep}{2.2pt}
\footnotesize
\begin{tabular}{@{}lcccccc@{}}
\toprule
Method & $\tcr$ & $\mathrm{TCR}_{\NTC}$ & Wt.\ thr. & Wt./UB & Cliff's $\delta$ & $p_{\text{Holm}}$ \\
\midrule
\bfseries CNP & \bfseries 0.902$\pm$0.009 & \bfseries 0.731 & \bfseries 274.9 & \bfseries 0.874 & ref. & ref. \\
CNP-bandit$^{\ddagger}$ & 0.901$\pm$0.009 & 0.732 & 274.9 & 0.874 & $-0.01$ & 0.465 \\
EDF+rand$^{\dagger}$ & 0.846$\pm$0.014 & 0.854 & 271.3 & 0.863 & $-0.57$ & $<$0.001 \\
GA & 0.838$\pm$0.012 & 0.818 & 266.5 & 0.847 & $-0.68$ & $<$0.001 \\
PSO & 0.830$\pm$0.013 & 0.816 & 264.6 & 0.841 & $-0.72$ & $<$0.001 \\
LLF & 0.815$\pm$0.013 & 0.830 & 262.1 & 0.833 & $-0.81$ & $<$0.001 \\
CR & 0.813$\pm$0.014 & 0.832 & 261.6 & 0.832 & $-0.80$ & $<$0.001 \\
ERA & 0.809$\pm$0.014 & 0.790 & 257.3 & 0.818 & $-0.81$ & $<$0.001 \\
TCEDF+EFT$^{\dagger}$ & 0.809$\pm$0.014 & 0.785 & 257.0 & 0.817 & $-0.82$ & $<$0.001 \\
COVERT & 0.805$\pm$0.013 & 0.822 & 259.0 & 0.824 & $-0.85$ & $<$0.001 \\
HEFT & 0.802$\pm$0.012 & 0.819 & 258.1 & 0.821 & $-0.89$ & $<$0.001 \\
EDD / EFDF$^{\dagger}$ & 0.801$\pm$0.013 & 0.816 & 257.5 & 0.819 & $-0.88$ & $<$0.001 \\
Max-Min & 0.800$\pm$0.013 & 0.820 & 257.7 & 0.819 & $-0.89$ & $<$0.001 \\
EDF & 0.799$\pm$0.013 & 0.812 & 256.6 & 0.816 & $-0.90$ & $<$0.001 \\
Sufferage & 0.795$\pm$0.013 & 0.813 & 255.9 & 0.814 & $-0.92$ & $<$0.001 \\
FCFS & 0.795$\pm$0.013 & 0.809 & 255.5 & 0.812 & $-0.90$ & $<$0.001 \\
DDQN & 0.791$\pm$0.013 & 0.819 & 255.4 & 0.812 & $-0.91$ & $<$0.001 \\
Min-Min & 0.779$\pm$0.012 & 0.793 & 250.5 & 0.796 & $-0.96$ & $<$0.001 \\
\bottomrule
\end{tabular}
\vspace{2pt}
\begin{flushleft}\footnotesize
Mean over 60 instances; the 95\% CI is shown for the primary metric $\tcr$ only.
Wt.\ thr.\ is weighted on-time throughput ($\wtc{=}2$, $\wntc{=}1$) and Wt./UB
expresses it as a fraction of the CP-SAT upper bound (higher is better). Cliff's
$\delta$ and the Holm-corrected paired-Wilcoxon $p$ are referenced to CNP, with
negative $\delta$ indicating a method weaker than CNP. Proposed method in bold.
$^{\dagger}$EDD and EFDF coincide on these instances; EDF+rand and TC-EDF+EFT are
the \emph{online} decomposition controls. $^{\ddagger}$Bandit variant of the
proposed auction, not a baseline.
\end{flushleft}
\end{table}

\begin{figure}[t]
\centering
\includegraphics[width=\columnwidth]{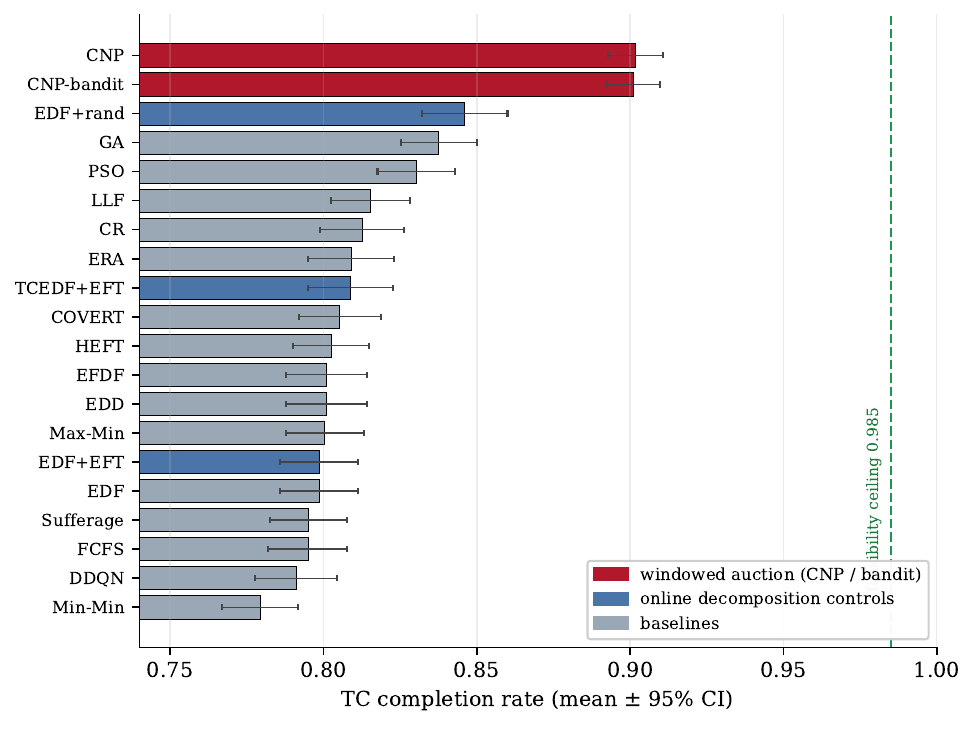}
\caption{Time-critical completion rate by method; dashed line marks the
feasibility ceiling.}
\label{fig:ranking}
\end{figure}

\subsection{Decomposition: Where the Advantage Comes From}\label{sec:decomp}
Table~\ref{tab:decomp} varies the three factors of the scheduler one at a time
over the same $60$ instances: task \emph{ordering} (plain EDF versus
time-critical-first EDF), \emph{placement} (earliest-finish-time versus a random
deadline-feasible server), and the \emph{batching horizon} $W$. Three findings
follow.

First, the batching horizon is the largest lever. Holding ordering and placement
fixed at time-critical-first and EFT, $\tcr$ climbs from $0.795$ at $W{=}1$, where
there is no lookahead at all, to $0.896$ at $W{=}40$ and $0.942$ at $W{=}200$.

Second, time-critical-first ordering supplies the other half of the gain, but it
reallocates work rather than creating it. At $W{=}40$ it raises $\tcr$ from
$0.851$ under plain EDF to $0.896$, while best-effort completion falls from
$0.845$ to $0.731$. Ordering does nothing at $W{=}1$, since a window of one task
cannot be reordered.

Third, greedy EFT placement is not the best choice. Spreading load across all
deadline-feasible servers beats it on both classes ($0.877$ against $0.851$ at
$W{=}40$ under EDF), because EFT keeps sending work to the earliest-finishing,
and usually fastest, PU and leaves slower servers idle until they are needed. The
same effect explains an ordering in Table~\ref{tab:main} that looks odd at first
glance: the online EDF+random control reaches $0.846$ and outranks online
EDF+EFT, which coincides with the plain EDF rule at $0.799$. We keep EFT in the
auction for its determinism and per-server locality, and accept the small
measured cost in $\tcr$.

Together these results place the advantage outside the auction. The windowed
TC-EDF+EFT list scheduler at $W{=}40$ reaches $0.896$ and so reproduces the full
auction ($0.902$) within their confidence intervals, which means the contract-net
machinery adds nothing over the list scheduler it implements. The UCB1 bandit
tells the same story, matching the static auction ($0.901$ against $0.902$; Holm
$p=0.465$). While the load is stationary the heuristic already runs near-optimal,
leaving no headroom for adaptation to exploit, whether by an LLM or otherwise.

\begin{table}[t]
\centering
\caption{Factorial decomposition by ordering, placement, and batching horizon.}
\label{tab:decomp}
\renewcommand{\arraystretch}{1.2}
\setlength{\tabcolsep}{4pt}
\footnotesize
\begin{tabular}{@{}llccc@{}}
\toprule
Ordering & Place & $W{=}1$ & $W{=}40$ & $W{=}200$ \\
\midrule
EDF & EFT & 0.795 (0.808) & 0.851 (0.845) & 0.869 (0.867) \\
EDF & rand & 0.836 (0.850) & 0.877 (0.877) & 0.889 (0.893) \\
TC-first & EFT & 0.795 (0.808) & \textbf{0.896 (0.731)} & 0.942 (0.630) \\
TC-first & rand & 0.836 (0.850) & 0.917 (0.762) & 0.957 (0.617) \\
\bottomrule
\end{tabular}
\end{table}

\subsection{When LLM Orchestration Helps: Non-Stationary Load}\label{sec:disruption}
Every workload so far has been stationary, and that is precisely the condition
under which adaptation has nothing to offer. We therefore introduce two
disruptions on the same instances: a mid-run surge of $40$ tightly-deadlined
safety-critical tasks, modelling a cluster of vehicles converging on a busy
intersection, and a mid-run edge-server outage.

The two differ in an important way. Under the surge the static heuristic is no
longer near-optimal: its $\tcr$ falls to about $0.76$ while the offline optimum
stays near $0.83$, leaving roughly $7$ points that a better policy could in
principle recover. Under an outage there is far less to recover, because
earliest-finish-time placement already reroutes work around the lost server.

Table~\ref{tab:disruption} and Fig.~\ref{fig:disruption} report the outcome. The
LLM control plane produces a small but \emph{statistically significant}
improvement over the static heuristic (CNP-LLM $+0.005$, $p=0.004$; HAS
$+0.006$, $p=0.005$), whereas CNP-bandit does not ($p=0.571$). Since both adapt
the same parameters, the difference points to the LLM's context-aware, zero-shot
per-window decisions rather than to adaptation in general. Under an outage, and
on stationary load (Sec.~\ref{sec:decomp}), nothing beats the static heuristic
significantly. Adding the monitor on top of the per-window agents changes little,
as HAS and CNP-LLM perform comparably.

The benefit is therefore real but conditional. It appears only when
non-stationarity opens room a better policy can recover, and even then it
captures a modest fraction of that room.

\begin{table}[t]
\centering
\caption{Completion rates under surge and outage disruptions.}

\label{tab:disruption}
\renewcommand{\arraystretch}{1.1}
\setlength{\tabcolsep}{3.5pt}
\begin{tabular}{@{}llccc@{}}
\toprule
Condition & Method & $\tcr$ & $\Delta$ & $p$ \\
\midrule
\multirow{4}{*}{Surge}
 & static (heuristic) & 0.7573$\pm$0.0125 & n/a & n/a \\
 & CNP-bandit & 0.7569$\pm$0.0125 & $-0.0004$ & 0.571 \\
 & CNP-LLM & 0.7626$\pm$0.0123 & $+0.0053$ & \textbf{0.004} \\
 & HAS & 0.7633$\pm$0.0124 & $+0.0060$ & \textbf{0.005} \\
\midrule
\multirow{4}{*}{Outage}
 & static (heuristic) & 0.8810$\pm$0.0168 & n/a & n/a \\
 & CNP-bandit & 0.8796$\pm$0.0174 & $-0.0014$ & 0.223 \\
 & CNP-LLM & 0.8824$\pm$0.0161 & $+0.0014$ & 0.313 \\
 & HAS & 0.8805$\pm$0.0167 & $-0.0005$ & 0.916 \\
\bottomrule
\end{tabular}
\end{table}

\begin{figure}[t]
\centering
\includegraphics[width=0.92\columnwidth]{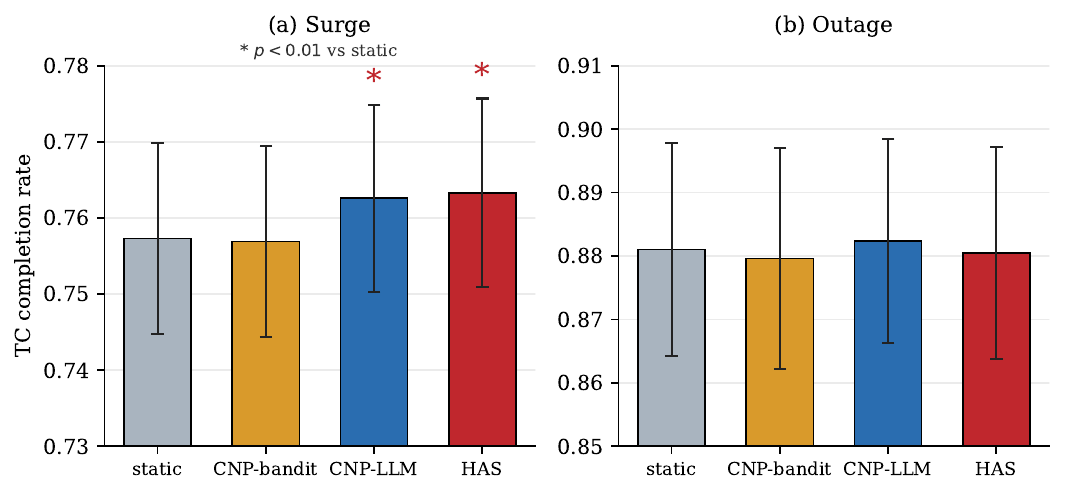}
\caption{Completion rate under surge and outage; note the truncated
$y$-axes.}
\label{fig:disruption}
\end{figure}

\subsection{LLM Control-Plane Analysis}\label{sec:llm}
A gain matters only if the layer producing it can be deployed, so we measured
what the control plane costs. On a held-out instance with live model calls it
issues $27$ schema-constrained calls at a mean latency of $3.0$\,s (median
$2.8$\,s, 95th percentile $4.3$\,s). That is orders of magnitude slower than a
data-plane window, which confirms the constraint the architecture was built
around: the LLM must stay off the per-task path and adapt on a slow cadence. It
also has a practical consequence. At realistic arrival rates a per-window call
would not return in time, so a deployment needs the stale-policy fallback of
Sec.~\ref{sec:method}, either the last committed $\Theta$ or the heuristic
default.

We also checked whether the model's explanations can be trusted. The monitor
changed its parameters at one of the two review points, and its logged rationales
are accurate in the sense that they cite the observed TC and NTC
deadline-miss rates and the parameter change follows from what they cite. In one
case it raised best-effort deferral as the NTC miss rate spiked during the surge.
Because responses are cached, each reported decision is a single sample, and
measuring how much LLM output varies across seeds and model scales remains future
work.

\subsection{Threats to Validity}
Results are from a simulator with deterministic channels and a single primary
workload size; absolute rates would differ on a physical vehicular testbed,
though the structural ordering of deadline-aware versus deadline-agnostic methods
should persist. The disruption benefit, while significant, is small and shown for
a specific surge model, and other non-stationarities may differ. The LLM findings
depend on a lightweight model; a model-scale and sampling-variance study is
future work, as are richer stochastic-channel and RSU/broker-queueing models.

\section{Conclusion}\label{sec:conclusion}
We asked when a multi-agent LLM control layer helps deadline-aware,
mixed-criticality task scheduling for autonomous vehicles at the mobile edge.

The answer has two parts. Our scheduler, a windowed contract-net auction that is
in essence a time-critical-first EDF list scheduler with earliest-finish-time
placement, outperforms $15$ standard baselines and reaches $0.87$ of a CP-SAT
upper bound. Decomposing it showed that the advantage comes from the batching
horizon and time-critical-first ordering, not from the auction and not from any
learned component. While the load stays stationary the heuristic is already close
to optimal, and neither an LLM control plane nor online adaptation improves on
it. When a surge of safety-critical tasks arrives mid-run, the picture changes:
the LLM control plane gains significantly over both that heuristic and a non-LLM
bandit, which isolates a genuine, if conditional, benefit to LLM reasoning.

For practitioners the message is simple. LLM orchestration earns its cost only
when non-stationarity opens headroom a fixed policy cannot use, and even then the
gain is modest, so the layer should be justified by the variability of the
workload rather than adopted by default. Future work will examine broader forms
of non-stationarity, larger models and their sampling variance, and a physical
vehicular testbed with stochastic channels.

\appendices
\section{Baseline Hyperparameters and Convergence}\label{app:hp}
PSO used $24$ particles, $30$ iterations, $w{=}0.7$, $c_1{=}c_2{=}1.5$. GA used a population of $30$, $30$ generations, tournament size $3$, blend crossover,
Gaussian mutation (rate $0.1$), and elitism $2$. DDQN used a two-layer MLP ($64$
hidden units), double-Q targets, a replay buffer of $5000$, a target update every
$500$ steps, $\epsilon$ annealed from $1.0$ to $0.05$, and $60$ training episodes,
with the action selecting among the three earliest-deadline ready tasks. PSO and
GA optimise per-task priorities and DDQN trains over the full instance, an offline
advantage over the strictly-online proposed method that they nonetheless fail to
convert. Fig.~\ref{fig:conv} shows convergence: PSO and GA plateau within the
allotted budget, and DDQN's episode return stabilizes.

\begin{figure}[t]
\centering
\includegraphics[width=\columnwidth]{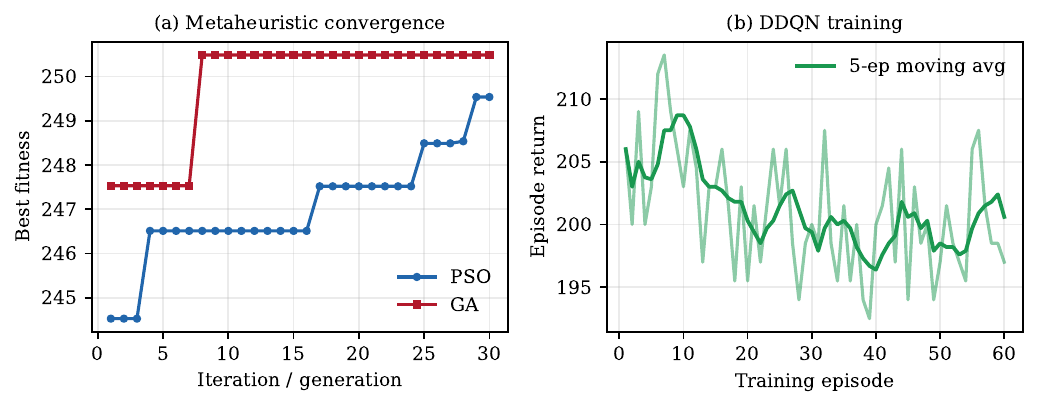}
\caption{Convergence on one instance: (a) PSO and GA best fitness; (b) DDQN raw
and 5-episode-averaged return.}

\label{fig:conv}
\end{figure}



\end{document}